\documentclass[10pt, a4paper]{amsart}
\usepackage{amsmath, amsfonts, amssymb, mathtools, url, graphicx, graphicx}
\usepackage{newtxtext}
\usepackage{newtxmath}

\usepackage{enumitem}
\usepackage{algorithm,algorithmic}
\usepackage[usenames, dvipsnames]{color}
\allowdisplaybreaks
\usepackage{tikz}
\usetikzlibrary{automata,positioning}
\usepackage{caption}
\usepackage{subcaption}
\newtheorem{theorem}{Theorem}

\newtheorem{lemma}{Lemma}

\newtheorem{defn}{Definition}
\newtheorem{assump}{Assumption}

\newtheorem{prob}{Problem}
\newtheorem{rem}{Remark}

\newtheorem{experiment}{Experiment}


\newcommand{\norm}[1]{\left\lVert{#1}\right\rVert}
\newcommand{\abs}[1]{\left\lvert{#1}\right\rvert}

\newcommand{\pmat}[1]{\begin{pmatrix}#1\end{pmatrix}}

\renewcommand{\geq}{\geqslant}

\renewcommand{\leq}{\leqslant}

\newcommand{\R}{\mathbb{R}}
\newcommand{\N}{\mathbb{N}}

\newcommand{\is}{i_{s}}
\newcommand{\iu}{i_{u}}
\newcommand{\Svec}{\mathcal{S}}
\newcommand{\KL}{\mathcal{KL}}
\newcommand{\wv}{\overline{w}}
\newcommand{\we}{\underline{w}}
\renewcommand{\Re}{\text{Re}}
\newcommand{\Kinfty}{\mathcal{K}_{\infty}}
\renewcommand{\L}{\mathcal{L}}
\newcommand{\ls}{\lambda_{i_{s}}}
\newcommand{\lu}{\lambda_{i_{u}}}
\renewcommand{\l}{\ell}

\DeclareMathOperator{\minimize}{minimize}
\DeclareMathOperator{\sbjto}{subject\;to}

\allowdisplaybreaks

\title{A scheduling algorithm for networked control systems}
\author{Atreyee Kundu}%
\thanks{The author is with the Department of Electrical Engineering, Indian Institute of Science Bangalore, Bengaluru - 560012, Karnataka, India, E-mail: \texttt{atreyeek@iisc.ac.in}}

\keywords{}

\date{\today}


\begin{document}


	\begin{abstract}
      		This paper deals with the design of scheduling logics for Networked Control Systems (NCSs) whose shared communication networks have limited capacity. We assume that among \(N\) plants, only \(M\:(< N)\) plants can communicate with their controllers at any time instant. We present an algorithm to allocate the network to the plants periodically such that stability of each plant is preserved. The main apparatus for our analysis is a switched systems representation of the individual plants in an NCS. We rely on multiple Lyapunov-like functions and graph-theoretic arguments to design our scheduling logics. The set of results presented in this paper is a continuous-time counterpart of the results proposed in \cite{abc}. We present a set of numerical experiments to demonstrate the performance of our techniques.
	\end{abstract}
	
	\maketitle
	
\section{Introduction}
\label{s:intro}
    Networked Control Systems (NCSs) are spatially distributed control systems in which the communication between plants and their controllers occurs through shared communication networks. NCSs find wide applications in sensor networks, remote surgery, haptics collaboration over the internet, automated highway systems, unmanned aerial vehicles, etc. \cite{Hespanha2007}. While the use of shared communication networks in NCSs offers flexible architectures and reduced installation and maintenance costs, the exchange of information between the plants and their controllers often suffers from network induced limitations and uncertainties.

    In this paper we deal with NCSs whose communication networks have limited bandwidth. While NCSs applications typically involve a large number of plants, the bandwidth of the shared network is often limited. Examples of communication networks with limited bandwidth include wireless networks (an important component of smart home, smart transportation, smart city, remote surgery, platoons of autonomous vehicles, etc.) and underwater acoustic communication systems. The scenario in which the number of plants sharing a communication network is higher than the capacity of the network is called \emph{medium access constraint}. This constraint motivates a need for allocating the network to the plants in a manner so that good qualitative properties of each plant in the NCS are preserved. This task of efficient allocation of a shared communication network is commonly referred to as a \emph{scheduling problem} and the corresponding allocation scheme is called a \emph{scheduling logic}. We are interested in algorithmic design of scheduling logics for NCSs.

    Scheduling logics can be classified broadly into two categories: \emph{static} and \emph{dynamic}. In case of the former, a finite length allocation scheme of the network is determined offline and is applied eternally in a periodic manner, while in case of the latter, the allocation of the shared network is determined based on some information about the plant (e.g., states, outputs, access status of sensors and actuators, etc.). For NCSs with continuous-time linear plants, static scheduling logics that preserve stability of all plants are characterized using common Lyapunov functions in \cite{Hristu2001} and piecewise Lyapunov-like functions with average dwell time switching in \cite{Lin2005}. A more general case of co-designing a static scheduling logic and control action is addressed using combinatorial optimization with periodic control theory in \cite{Rehbinder2004} and Linear Matrix Inequalities (LMIs) optimization with average dwell time technique in \cite{Dai2010}. In the discrete-time setting, a blend of multiple Lyapunov-like functions and graph theory was employed to design stability preserving periodic scheduling logics recently in \cite{abc}. The authors of \cite{Zhang2006} characterize static switching logics that ensure reachability and observability of the plants under limited communication, and design an observer-based feedback controller for these logics. The corresponding techniques were later extended to the case of constant transmission delays \cite{Hristu2008} and Linear Quadratic Gaussian (LQG) control problem \cite{Hristu_Zhang2008}. Event-triggered scheduling logics that preserve stability of all plants under communication delays are proposed in \cite{Al-Areqi'15}. In \cite{Quevedo2014} the authors propose a mechanism to allocate network resources by finding optimal node that minimizes a certain cost function in every network time instant. The design of dynamic scheduling logics for stability of each plant under both communication uncertainties and computational limitations is studied in \cite{Saha2015}. In \cite{Gatsis2016} a class of distributed control-aware random network access logics for the sensors such that all control loops are stabilizable, is presented. A dynamic scheduling logic based on predictions of both control performance and channel quality at run-time, is proposed recently in \cite{Ma2019}.

    In this paper we consider an NCS consisting of multiple continuous-time linear plants whose feedback loops are closed through a shared communication network. A block diagram of such an NCS is shown in Figure \ref{fig:ncs}.
    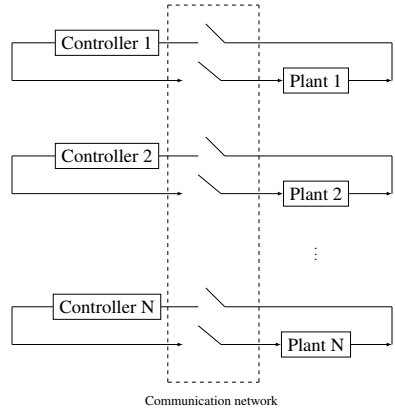
\begin{figure}[htbp]
    	\begin{center}
	\scalebox{0.5}{
	\begin{tikzpicture}[every path/.style={>=latex},base node/.style={draw,rectangle, scale = 1.4}]
	\node[base node] (a) at (-2,5) {Controller 1};
	\node[base node] (b) at (3.5,4) {Plant 1};
	\node[base node] (c) at (-2,2) {Controller 2};
	\node[base node] (d) at (3.5,1) {Plant 2};
	\node[base node] (e) at (-2,-2) {Controller N};
	\node[base node] (f) at (3.5,-3) {Plant N};	
	
	\draw (-4.5 ,5) edge (a);
	\draw (-4.5,5) edge (-4.5,4);
	\draw[->] (-4.5,4) -- (0,4);
	\draw (a) edge (0.4,5);
	\draw[->] (b) -- (5.5,4);
	\draw (5.5,4) edge (5.5,5);
	\draw[->] (1,4) -- (b);
	\draw[-.] (1,4) -- (0.4,4.5);
	\draw (5.5,5) edge (1.1,5);
	\draw[-.] (1.1,5) -- (0.6,5.5);

	\draw (-4.5 ,2) edge (c);
	\draw (-4.5,2) edge (-4.5,1);
	\draw[->] (-4.5,1) -- (0,1);
	\draw (c) edge (0.4,2);
	\draw[->] (d) -- (5.5,1);
	\draw (5.5,1) edge (5.5,2);
	\draw[->] (1,1) -- (d);
	\draw[-.] (1,1) -- (0.4,1.5);
	\draw (5.5,2) edge (1.1,2);
	\draw[-.] (1.1,2) -- (0.6,2.5);

	\draw (-4.5 ,-2) edge (e);
	\draw (-4.5,-2) edge (-4.5,-3);
	\draw[->] (-4.5,-3) -- (0,-3);
	\draw (e) edge (0.4,-2);
	\draw[->] (f) -- (5.5,-3);
	\draw (5.5,-3) edge (5.5,-2);
	\draw[->] (1,-3) -- (f);
	\draw[-.] (1,-3) -- (0.4,-2.5);
	\draw (5.5,-2) edge (1.1,-2);
	\draw[-.] (1.1,-2) -- (0.6,-1.5);

	\draw[dashed] (-0.4,-4) -- (-0.4,6);
	\draw[dashed] (2,-4) -- (2,6);
	\draw[dashed] (-0.4,-4) -- (2,-4);
	\draw[dashed] (-0.4,6) -- (2,6);

	\node (g) at (0.75,-4.5) {Communication network};
	\node (h) at (3.5,-0.5) {\(\vdots\)};

	\end{tikzpicture}
	}
	\caption{Block diagram of NCS}\label{fig:ncs}
	\end{center}
    \end{figure}
    We assume that the plants are unstable in open-loop and asymptotically stable in closed-loop. Due to a limited communication capacity of the network, only a few plants can exchange information with their controllers at any instant of time. Consequently, the remaining plants operate in open-loop at every time instant. We will design periodic scheduling logics that preserve global asymptotic stability (GAS) of each plant in the NCS. The set of results presented in this paper is a continuous-time counterpart of the results in \cite{abc}.

    We employ switched systems and graph theory as the main apparatuses for our analysis. We model the individual (open-loop unstable) plants of an NCS as switched systems, where the switching is between their open-loop (unstable mode) and closed-loop (stable mode) operations. In this setting no switched system can operate in stable mode for all time as that will destabilize some of the plants in the NCS. The search for a stabilizing scheduling logic then becomes the problem of finding switching logics that obey the limitations of the shared network and preserve stability. We assume that the exchange of information between a plant and its controller is not affected by communication uncertainties. We associate a weighted directed graph with the NCS that captures the communication limitation of the shared network, and design stabilizing switching logic{s} for each plant in the NCS. Multiple Lyapunov-like functions are employed for analyzing stability of the switched systems. Towards designing a stabilizing periodic scheduling logic, we combine stabilizing switching logics in terms of a class of cycles on the underlying weighted directed graph of the NCS that satisfies appropriate contractivity properties. We also discuss algorithmic construction of these cycles. It is known that periodic scheduling logics are easier to implement, often near optimal, and guarantee activation of each sensor and actuator, see \cite{Peters'16,Longo, Hristu'05} for detailed discussions. They are preferred for safety-critical control systems \cite[\S2.5.1]{Longo}. It is also observed in \cite{Orihuela'14, Peters'16} that periodic phenomenon appears in non-periodic schedules. We demonstrate the techniques proposed in this paper on a numerical example.

    The remainder of this paper is organized as follows: in \S\ref{s:prob_stat} we formulate the problem under consideration. The apparatuses for our design of scheduling logics and analysis of stability are described in \S\ref{s:prelims}. Our results appear in \S\ref{s:mainres}. We present a numerical example in \S\ref{s:num_ex} and conclude in \S\ref{s:concln} with a brief discussion of future research directions.

    We employ standard notations throughout the paper. \(\N\) is the set of natural numbers, \(\N_{0} = \N\cup\{0\}\), and \(\R\) is the set of real numbers. \(\norm{\cdot}\) denotes the Euclidean norm (resp., induced matrix norm) of a vector (resp., a matrix). For a finite set \(A\), we employ \(\abs{A}\) to denote its cardinality, i.e., the number of elements in \(A\). For a matrix \(P\in\R^{d\times d}\), \(\lambda_{\max}(P)\) and \(\lambda_{\min}(P)\) denote the maximum and minimum eigenvalues of \(P\), respectively. For a scalar \(a\), possibly complex, \(\Re(a)\) denotes the real part of \(a\).
\section{Problem statement}
\label{s:prob_stat}
    We consider an NCS with \(N\) plants whose dynamics are given by
    \begin{align}
    \label{e:plants}
        \dot{x}_i(t) = A_i x_i(t) + B_i u_i(t),\:\:x_i(0)=x_i^{0},\:\:t\in[0,+\infty[,
    \end{align}
    where \(x_i(t)\in\R^{d}\) and \(u_i(t)\in\R^{m}\) are the vectors of states and inputs of the \(i\)-th plant at time \(t\), respectively, \(i=1,2,\ldots,N\). All plants communicate with their remotely located state-feedback controllers
    \begin{align}
    \label{e:controllers}
        u_i(t) = K_i x_i(t),\:\:i=1,2,\ldots,N
    \end{align}
    through a shared communication network. The matrices \(A_i\in\R^{d\times d}\), \(B_i\in\R^{d\times m}\) and \(K_i\in\R^{m\times d}\), \(i=1,2,\ldots,N\) are known.

    We will operate under the following set of assumptions:
    \begin{assump}
    \label{a:capacity}
    \rm{
        The shared communication network has a limited communication capacity in the sense that at any time instant, only \(M\) plants \((0<M<N)\) can access the network. Consequently, \(N-M\) plants operate in open-loop at every time instant.
        }
    \end{assump}
    \begin{assump}
    \label{a:stability}
    \rm{
        The open-loop dynamics of each plant is unstable and each controller is stabilizing. More specifically, the matrices \(A_i+B_iK_i\), \(i=1,2,\ldots,N\) are stable (Hurwitz) and the matrices \(A_i\), \(i=1,2,\ldots,N\) are unstable.\footnote{Recall that a matrix \(A\in\R^{d\times d}\) is Hurwitz if every eigenvalue of \(A\) has strictly negative real part. We call \(A\) unstable if it is not Hurwitz.}
        }
    \end{assump}
    \begin{assump}
    \label{a:ideal_comm}
    \rm{
        The shared communication network is ideal in the sense that the exchange of information between the plants and their controllers is not affected by communication uncertainties.
        }
    \end{assump}

    In view of Assumptions \ref{a:capacity} and \ref{a:stability}, each plant in \eqref{e:plants} operates in two modes: (a) stable mode when the plant has access to the shared communication network and (b) unstable mode when the plant does not have access to the network. Let us denote the stable and unstable modes of the \(i\)-th plant as \(\is\) and \(\iu\), respectively, \(A_{\is} = A_i + B_i K_i\) and \(A_{\iu} = A_i\), \(i=1,2,\ldots,N\). In this paper we are interested in finding a mechanism to allocate the shared communication network to the plants such that stability of each plant in \eqref{e:plants} is preserved.

    Let
    \[
        \Svec := \{s\in\{1,2,\ldots,N\}^{M}\:|\:\text{all elements of \(s\) are distinct}\}
    \]
    be the set of vectors that consist of \(M\) distinct elements from \(\{1,2,\ldots,N\}\). We call a function \(\gamma:[0,+\infty[\to\Svec\) a \emph{scheduling logic}. There exists a diverging sequence of times \(0=:\tau_0<\tau_1<\tau_2<\cdots\) and a sequence of indices \(s_0, s_1, s_2,\ldots\) with \(s_j\in\Svec\), \(j=0,1,2,\ldots\) such that \(\gamma(t) = s_j\) for all \(t\in[\tau_j:\tau_{j+1}[\), \(j=0,1,2,\ldots\). In other words, \(\gamma\) specifies, at every time \(t\), \(M\) plants of the NCS which access the communication network at that time. The remaining \(N-M\) plants operate in open-loop, i.e., with \(u_i(t) = 0\).

    \begin{defn}{\cite[Lemma 4.4]{Khalil}}
    \label{d:gas}
    \rm{
        The \(i\)-th plant in \eqref{e:plants} is globally asymptotically stable (GAS) for a given scheduling logic \(\gamma\), if there exists a class \(\KL\) function \(\beta_i\) such that the following inequality holds:
        \begin{align}
        \label{e:gas}
            \norm{x_i(t)}\leq \beta_i(\norm{x_i(0)},t)\:\:\text{for all}\:x_i(0)\in\R^{d}\:\:\text{and}\:t\in[0,+\infty[.
        \end{align}
    }
    \end{defn}
    We will solve the following problem:
    \begin{prob}
    \label{prob:main}
        Given the matrices \(A_i\), \(B_i\), \(K_i\), \(i=1,2,\ldots,N\) and a number \(M (< N)\), design a scheduling logic \(\gamma\) that preserves GAS of each plant in \eqref{e:plants}.
    \end{prob}
    We will model each plant in \eqref{e:plants} as a switched system and associate a labelled and weighted directed graph with the NCS under consideration. Scheduling logics \(\gamma\) that preserve GAS of each plant \(i\) in \eqref{e:plants} will be designed by concatenating cycles on this directed graph that satisfy certain contractivity properties. In the sequel we will refer to such scheduling logics as stabilizing scheduling logics. Prior to presenting our solution to Problem \ref{prob:main}, we catalog a set of preliminaries required for our analysis.
\section{Preliminaries}
\label{s:prelims}
\subsection{Individual plants and switched systems}
\label{ss:swsys-plants}
    We model the dynamics of the \(i\)-th plant in \eqref{e:plants} as a switched system
    \begin{align}
    \label{e:i-swsys}
        \dot{x}_i(t) = A_{\sigma_i(t)}x_i(t),\:\:x_i(0) = x_i^{0},\:\:\sigma_i(t)\in\{\is,\iu\},\:\:t\in[0,+\infty[,
    \end{align}
    where the subsystems are \(\{A_{\is},A_{\iu}\}\) and a switching logic \(\sigma_i:\N_0\to\{\is,\iu\}\) satisfies:
    \begin{align*}
        \sigma_i(t) =
        \begin{cases}
            \is,\:\:\text{if}\:i\:\text{is an element of}\:\gamma(t),\\
            \iu,\:\:\text{otherwise}.
        \end{cases}
    \end{align*}
    Clearly, a switching logic \(\sigma_i\), \(i=1,2,\ldots,N\) is a function of the scheduling logic \(\gamma\). In order to ensure GAS of the individual plants, it therefore, suffices to design a \(\gamma\) that renders each \(\sigma_i\) stabilizing in the following sense: \(\sigma_i\) guarantees GAS of the switched system \eqref{e:i-swsys} for each \(i=1,2,\ldots,N\).

    \begin{lemma}
    \label{fact:key1}
    \rm{
        For each \(i=1,2,\ldots,N\), there exist pairs \((P_p,\lambda_p)\), \(p\in\{\is,\iu\}\), where \(P_p\in\R^{d\times d}\) are symmetric and positive definite matrices, and \(\lambda_{\is} > 0\), \(\lambda_{\iu}\leq 0\), such that with
        \begin{align}
        \label{e:Vp_defn}
            \R^{d}\ni\xi\mapsto V_{p}(\xi) : = \xi^\top P_p \xi\in[0,+\infty[
        \end{align}
        we have
        \begin{align}
        \label{e:key_ineq1}
            V_{p}(z_{p}(t))\leq\exp(-\lambda_p t)V_{p}(z_{p}(0)),\:\:t\in[0,+\infty[,
        \end{align}
        and \(z_p(\cdot)\) solves the \(p\)-th recursion in \eqref{e:i-swsys}, \(p\in\{\is,\iu\}\).
    }
    \end{lemma}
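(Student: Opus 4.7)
My plan is to handle the two modes of the switched system \eqref{e:i-swsys} separately by classical Lyapunov analysis for linear time-invariant systems, exploiting the Hurwitz property of \(A_{\is}\) granted by Assumption \ref{a:stability} for the stable mode and using a free choice of \(P_{\iu}\) for the unstable mode. The condition \(\lambda_{\iu}\le 0\) is merely a sign convention permitting the Lyapunov-like function \(V_{\iu}\) to grow along trajectories, so no fine spectral information about \(A_i\) is required.

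For the stable mode \(p=\is\), since \(A_{\is}=A_i+B_iK_i\) is Hurwitz, I would invoke the standard Lyapunov equation: for any symmetric positive definite \(Q_{\is}\) there exists a unique symmetric positive definite \(P_{\is}\) with \(A_{\is}\transp P_{\is}+P_{\is}A_{\is}=-Q_{\is}\). Differentiating \(V_{\is}\) along a trajectory \(z_{\is}(\cdot)\) of \(\dot z=A_{\is}z\) yields \(\tfrac{d}{dt}V_{\is}(z_{\is}(t))=-z_{\is}(t)\transp Q_{\is} z_{\is}(t)\). Combining the Rayleigh-quotient bounds \(z\transp Q_{\is} z\ge \lambda_{\min}(Q_{\is})\norm{z}^2\) and \(\norm{z}^2\ge V_{\is}(z)/\lambda_{\max}(P_{\is})\) gives \(\tfrac{d}{dt}V_{\is}(z_{\is}(t))\le -\ls V_{\is}(z_{\is}(t))\) with \(\ls\Let \lambda_{\min}(Q_{\is})/\lambda_{\max}(P_{\is})>0\). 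Gr\"onwall's comparison principle then delivers the desired inequality \eqref{e:key_ineq1}.

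For the unstable mode \(p=\iu\), the Lyapunov-equation route is not available because \(A_i\) need not be Hurwitz. I would instead pick the simplest symmetric positive definite choice \(P_{\iu}=I\) and set \(\mu_{\iu}\Let\lambda_{\max}(A_i\transp+A_i)\). Along any trajectory \(z_{\iu}(\cdot)\) of \(\dot z=A_i z\) one then computes \(\tfrac{d}{dt}V_{\iu}(z_{\iu}(t))=z_{\iu}(t)\transp(A_i\transp+A_i)z_{\iu}(t)\le\mu_{\iu}\norm{z_{\iu}(t)}^2=\mu_{\iu}V_{\iu}(z_{\iu}(t))\). Defining \(\lu\Let -\max(\mu_{\iu},0)\le 0\) and integrating once more via Gr\"onwall produces \eqref{e:key_ineq1} for \(p=\iu\); the prescribed sign \(\lu\le 0\) holds by construction.

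No step is a genuine obstacle; everything reduces to routine Lyapunov manipulations together with Gr\"onwall's lemma. The only mildly delicate point, more conceptual than technical, is to recognize that the requirement \(\lambda_{\iu}\le 0\) should not be read as an assertion derived from the instability of \(A_i\). It is simply the loosest exponential bound compatible with a possibly growing Lyapunov-like function, and is secured effortlessly by the clipping \(\lu\Let -\max(\mu_{\iu},0)\), irrespective of the spectrum of \(A_i\).
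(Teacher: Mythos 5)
Your treatment of the stable mode \(p=\is\) is essentially identical to the paper's: solve the Lyapunov equation \(A_{\is}\transp P_{\is}+P_{\is}A_{\is}=-Q_{\is}\), bound the derivative of \(V_{\is}\) via Rayleigh quotients, set \(\ls=\lambda_{\min}(Q_{\is})/\lambda_{\max}(P_{\is})>0\), and integrate. For the unstable mode you take a genuinely different and in fact more elementary route. The paper shifts the matrix: it picks \(\varepsilon_{\iu}>\Re\bigl(\lambda_{\max}(A_{\iu})\bigr)\) so that \(A_{\iu}-\varepsilon_{\iu}I_d\) is Hurwitz, solves the Lyapunov equation for the shifted matrix, and extracts \(\lu\) as \(2\varepsilon_{\iu}-\lambda_{\min}(Q_{\iu})/\lambda_{\max}(P_{\iu})\), leaving the sign requirement \(\lu\le 0\) to an unverified compatibility condition between \(\varepsilon_{\iu}\), \(Q_{\iu}\), and the resulting \(P_{\iu}\). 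You instead fix \(P_{\iu}=I\) and read the growth rate off the numerical abscissa \(\lambda_{\max}(A_i\transp+A_i)\); this is correct, and your clipping \(\lu=-\max\bigl(\lambda_{\max}(A_i\transp+A_i),0\bigr)\) is actually vacuous under Assumption \ref{a:stability}, since \(\lambda_{\max}(A_i\transp+A_i)\) is at least twice the largest real part of the eigenvalues of \(A_i\), hence nonnegative whenever \(A_i\) is not Hurwitz. Your approach buys a shorter, self-contained existence proof with no tuning of \(\varepsilon_{\iu}\) and \(Q_{\iu}\); what it gives up is the degree of freedom in \(P_{\iu}\) that the paper exploits downstream, where Algorithm \ref{algo:cycle_design} searches over \(P_{\iu}\) jointly with \(P_{\is}\) to shrink both \(\abs{\lu}\) and the jump constants \(\mu_{\is\iu}=\lambda_{\max}(P_{\iu}P_{\is}^{-1})\), \(\mu_{\iu\is}=\lambda_{\max}(P_{\is}P_{\iu}^{-1})\); moreover \(\lambda_{\max}(A_i\transp+A_i)\) can be a very pessimistic growth estimate when \(A_i\) is far from normal. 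For the lemma as stated, which only asserts existence of one admissible pair \((P_p,\lambda_p)\) per mode, your argument is complete.
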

    \begin{proof}
        Fix \(i\in\{1,2,\ldots,N\}\). Since \(A_{\is}\) is Hurwitz, the matrix \(P_{\is}\) can be selected as a symmetric and positive definite solution to the Lyapunov equation
        \[
            A_{\is}^\top P_{\is} + P_{\is}A_{\is} + Q_{\is} = 0_{d}
        \]
        for some pre-selected symmetric and positive definite matrix \(Q_{\is}\in\R^{d\times d}\) \cite[Corollary 11.9.1]{Bernstein}. A straightforward calculation gives
        \[
            \frac{d}{dt}\biggl(V_{\is}\bigl(z_{\is}(t)\bigr)\biggr) = -z_{\is}(t)^\top Q_{\is} z_{\is}(t),
        \]
        where \(z_{\is}(\cdot)\) solves the \(\is\)-th system dynamics in \eqref{e:i-swsys}. Recall that for any symmetric matrix \(Z\in\R^{d\times d}\), we have \cite[Lemma 8.4.3]{Bernstein}
        \begin{align}
        \label{e:halp_ineq}
            \lambda_{\min}(Z)\norm{z}^{2} \leq z^\top Z z \leq \lambda_{\max}(Z)\norm{z}^{2}\:\:\text{for all}\:z\in\R^{d}.
        \end{align}
        It, therefore, follows that
        \begin{align*}
            -z_{\is}(t)^\top Q_{\is}z_{\is}(t)\leq-\frac{\lambda_{\min}(Q_{\is})}{\lambda_{\max}(P_{\is})}z_{\is}(t)^\top P_{\is}z_{\is}(t).
        \end{align*}
        Defining \(\displaystyle{\lambda_{\is} = \frac{\lambda_{\min}(Q_{\is})}{\lambda_{\max}(P_{\is})}}\), we arrive at
        \begin{align*}
            \frac{d}{dt}\biggl(V_{\is}\bigl(z_{\is}(t)\bigr)\biggr)\leq-\lambda_{\is}V_{\is}\bigl(z_{\is}(t)\bigr),
        \end{align*}
        which gives \eqref{e:key_ineq1} with \(\lambda_{\is} > 0\).

        Now, with \(A_{\iu}\) unstable, there exists \(\varepsilon_{\iu} > 0\) such that \(A_{\iu}-\varepsilon_{\iu}I_{d}\) is Hurwitz. Fix a symmetric and positive definite matrix \(Q_{\iu}\), and let \(P_{\iu}\) be the symmetric and positive definite solution to the Lyapunov equation
        \[
            (A_{\iu}-\varepsilon_{\iu}I_d)^\top P_{\iu} + P_{\iu}(A_{\iu}-\varepsilon_{\iu}I_d) + Q_{\iu} = 0_d.
        \]
        It follows that
        \[
            \frac{d}{dt}\biggl(V_{\iu}\bigl(z_{\iu}(t)\bigr)\biggr) = -z_{\iu}(t)^\top Q_{\iu}z_{\iu}(t),
        \]
        where \(z_{\iu}(\cdot)\) solves the \(\iu\)-th system dynamics in \eqref{e:i-swsys}. Applying \eqref{e:halp_ineq}, we arrive at
        \[
            -z_{\iu}(t)^\top Q_{\iu}z_{\iu}(t) \leq -\biggl(2\varepsilon_{\iu}-\frac{\lambda_{\min}(Q_{\iu})}{\lambda_{\max}(P_{\iu})}\biggr)z_{\iu}(t)^\top P_{\iu}z_{\iu}(t).
        \]
        Defining \(\displaystyle{\lambda_{\iu} = 2\varepsilon_{\iu}-\frac{\lambda_{\min}(Q_{\iu})}{\lambda_{\max}(P_{\iu})}}\), we arrive at
        \[
            \frac{d}{dt}\biggl(V_{\iu}\bigl(z_{\iu}(t)\bigr)\biggr)\leq-\lambda_{\iu}V_i\bigl(z_{\iu}(t)\bigr).
        \]
        Notice that any scalar larger than \(\Re\bigl(\lambda_{\max}(A_{\iu})\bigr)\) is a valid choice of \(\varepsilon_{\iu}\). One, therefore, needs to choose \(\varepsilon_{\iu}\) and \(Q_{\iu}\) such that \(\displaystyle{2\varepsilon_{\iu}-\frac{\lambda_{\min}(Q_{\iu})}{\lambda_{\max}(P_{\iu})}\leq 0}\).

        This completes our proof of Fact \ref{fact:key1}.
    \end{proof}
    \begin{lemma}
    \label{fact:key2}
    \rm{
        For each \(i=1,2,\ldots,N\), there exist \(\mu_{pq}\geq 1\) such that
        \begin{align}
        \label{e:key_ineq2}
            V_{q}(\xi)\leq\mu_{pq}V_p(\xi)\:\:\text{for all}\:\xi\in\R^{d}\:\:\text{and}\:\:p,q,\in\{\is,\iu\}.
        \end{align}
        }
    \end{lemma}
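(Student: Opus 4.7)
The plan is to compare the two quadratic forms $V_p(\xi) = \xi^\top P_p \xi$ and $V_q(\xi) = \xi^\top P_q \xi$ using the eigenvalue bounds already invoked in the proof of Lemma \ref{fact:key1}, namely the double inequality \eqref{e:halp_ineq}. Since both $P_{\is}$ and $P_{\iu}$ are symmetric and positive definite by construction, their maximum and minimum eigenvalues are strictly positive, and this is exactly the setting in which \eqref{e:halp_ineq} produces tight two-sided norm bounds on the quadratic form.

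Concretely, I would fix $i\in\{1,2,\ldots,N\}$ and $p,q\in\{\is,\iu\}$, and apply \eqref{e:halp_ineq} first to $Z = P_q$ to obtain the upper bound $V_q(\xi)\leq\lambda_{\max}(P_q)\norm{\xi}^2$, and then to $Z = P_p$ to obtain the lower bound $V_p(\xi)\geq\lambda_{\min}(P_p)\norm{\xi}^2$. Dividing the first inequality by the second (which is valid pointwise since $\lambda_{\min}(P_p)>0$ and the case $\xi=0$ is trivially handled) yields
\[
V_q(\xi)\;\leq\;\frac{\lambda_{\max}(P_q)}{\lambda_{\min}(P_p)}\,V_p(\xi)\qquad\text{for all }\xi\in\R^{d}.
\]

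The only remaining subtlety is the requirement $\mu_{pq}\geq 1$. When $p=q$, the ratio $\lambda_{\max}(P_p)/\lambda_{\min}(P_p)$ is automatically at least $1$; but when $p\neq q$, there is no a priori reason for the ratio to exceed $1$. I would therefore simply define
\[
\mu_{pq}\;:=\;\max\!\left\{1,\;\frac{\lambda_{\max}(P_q)}{\lambda_{\min}(P_p)}\right\},
\]
which preserves the inequality (since enlarging the constant on the right-hand side only weakens it) and meets the stipulation $\mu_{pq}\geq 1$. This completes the argument for the fixed pair $(p,q)$, and since $p,q$ range over the finite set $\{\is,\iu\}$ we obtain the four constants $\mu_{pq}$ claimed in the lemma.

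There is no real obstacle here: the proof is a direct application of the Rayleigh-type inequality \eqref{e:halp_ineq} already used in Lemma \ref{fact:key1}, together with the max with $1$ to enforce the normalization $\mu_{pq}\geq 1$. The only thing to be careful about is to write the bounds in the correct direction (upper bound on $V_q$, lower bound on $V_p$) so that the ratio lands on the correct side of the inequality.
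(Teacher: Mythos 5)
Your proof is correct and fills in exactly the detail the paper leaves implicit: the paper's own proof of this lemma consists of the single assertion that ``linear comparability of the $V_p$'s is clear from the definition,'' whereas you derive it explicitly from the Rayleigh-type bound \eqref{e:halp_ineq}. Your constant $\mu_{pq}=\max\bigl\{1,\lambda_{\max}(P_q)/\lambda_{\min}(P_p)\bigr\}$ is valid though slightly looser than the tight estimate $\lambda_{\max}(P_qP_p^{-1})$ that the paper quotes from \cite{chatterjee'14}, and your max-with-$1$ step is a sensible way to enforce $\mu_{pq}\geq 1$, a point the paper glosses over entirely.
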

    \begin{proof}
        Linear comparability of \(V_p\)'s is clear from the definition of \(V_p\), \(p\in\{\is,\iu\}\) in \eqref{e:Vp_defn}. The assertion of Fact \ref{fact:key2} follows at once.
    \end{proof}
    The function \(V_{p}\), \(p\in\{\is,\iu\}\), \(i=1,2,\ldots,N\) are called Lyapunov-like functions. The scalars \(\lambda_p\), \(p\in\{\is,\iu\}\) give quantitative measures of (in)stability associated to (un)stable modes of operation of the \(i\)-th plant. A \emph{tight} estimate of the scalars \(\mu_{pq}\), \(p,q\in\{\is,\iu\}\) is given by \(\lambda_{\max}(P_q P_p^{-1})\) \cite[Proposition 4]{chatterjee'14}. Facts \ref{fact:key1} and \ref{fact:key2} will be employed in our design of \(\gamma\).
\subsection{NCS and directed graphs}
\label{ss:NCS+digraph}
    We associate a directed graph \(G(V,E)\) with the NCS under consideration. The vertex set \(V\) contains \(N\choose M\) vertices that are labelled distinctly. The label associated to a vertex \(v\in V\) is given by \(L(v) = \{\ell_v(1),\ell_v(2),\ldots,\ell_v(N)\}\), where \(\ell_{v}(i) = \is\) for any \(M\) elements of \(\{1,2,\ldots,N\}\) and \(\ell_v(i) = \iu\) for the remaining \(N-M\) elements. The edge set \(E\) contains directed edges \((u,v)\) for every \(u,v\in V\), \(u\neq v\).

    Let the functions \(\wv(v) = \pmat{\wv_1(v)\\\wv_2(v)\\\vdots\\\wv_{N}(v)}\), \(v\in V\) with
    \begin{align}
    \label{e:vertex_wt}
        \wv_i(v) =
        \begin{cases}
            -\abs{\lambda_{\is}},\:\:\text{if}\:\ell_{v}(i) = \is,\\
            \abs{\lambda_{\iu}},\:\:\text{if}\:\ell_{v}(i) = \iu,
        \end{cases}
        i=1,2,\ldots,N,
    \end{align}
    and \(\we(u,v) = \pmat{\we_1(u,v)\\\we_2(u,v)\\\vdots\\\we_{N}(u,v)}\), \((u,v)\in E\) with
    \begin{align}
    \label{e:edge_wt}
        \we_i(u,v) =
        \begin{cases}
            \ln\mu_{\is\iu},\:\text{if}\:\ell_u(i) = \is\:\text{and}\:\ell_v(i) = \iu,\\
            \ln\mu_{\iu\is},\:\text{if}\:\ell_{u}(i) = \iu\:\text{and}\:\ell_v(i) = \is,\\
            0,\:\:\text{otherwise},
        \end{cases}
        \hspace*{-0.4cm}i=1,\ldots,N,
    \end{align}
     be the weight associated to a vertex \(v\in V\) and the weight associated to an edge \((u,v)\in E\), respectively. Here, the scalars \(\lambda_p\), \(p\in\{\is,\iu\}\) and \(\mu_{pq}\), \(p,q\in\{\is,\iu\}\), \(i=1,2,\ldots,N\), are as described in Fact \ref{fact:key1} and Fact \ref{fact:key2}, respectively.
    \begin{rem}
    \label{rem:compa1}
    \rm{
    The association of a directed graph with an NCS was first proposed in \cite{abc}. Here we employ a natural extension of this association to the continuous-time setting. The label \(L(v)\) of a vertex \(v\in V\) gives a combination of \(M\) plants operating in stable mode and the remaining \(N-M\) plants operating in unstable mode. Since \(V\) contains \(N\choose M\) vertices and the label of each vertex is distinct, it follows that the set of vertex labels consists of all possible combinations of \(M\) plants accessing the communication network and \(N-M\) plants operating in open-loop. A directed edge \((u,v)\) from a vertex \(u\) to a vertex \(v\:(\neq u)\) corresponds to a transition from a set of \(M\) plants accessing the communication network (as specified by \(L(v)\)). The vertex (subsystem) weights of \(G(V,E)\) capture the rate of increase/decrease of the Lyapunov-like functions \(V_p\), \(p\in\{\is,\iu\}\) and the edge weights of \(G(V,E)\) capture the ``jump'' between Lyapunov-like functions \(V_p\) and \(V_{q}\), \(p,q\in\{\is,\iu\}\). This choice of weights is employed with the objective to compensate the increase in \(V_p\), \(p\in\{\is,\iu\}\) caused by activation of unstable mode \(\iu\) and switches between stable and unstable modes (\(\is\) to \(\iu\) and \(\iu\) to \(\is\)) by the decrease in \(V_p\), \(p\in\{\is,\iu\}\) caused by the activation of the stable modes \(\is\), \(i=1,2,\ldots,N\), as will be useful in our analysis for GAS of the individual plants.
    }
    \end{rem}

    Recall that \cite[p.\ 4]{Bollobas} a \emph{cycle} on the directed graph \(G(V,E)\) is an alternating (finite) sequence of vertices and edges that begin and end on the same vertex, e.g., \(W = \tilde{v}_0,(\tilde{v}_0,\tilde{v}_1),\tilde{v}_1\),\\\(\ldots,\tilde{v}_{n},(\tilde{v}_{n},\tilde{v}_{0}),\tilde{v}_{0}\). The number of edges that appear in the sequence is called the \emph{length} of the cycle. Here the length of \(W\) is \(n\). We will employ the following class of cycles in our design of stabilizing scheduling logics:
    \begin{defn}
    \label{d:contrac_cycle}
    \rm{
        A cycle \(W = v_0,(v_0,v_1),v_1,\ldots,v_{n-1},(v_{n-1}\),\(v_0),v_0\) on \(G(V,E)\) is called \emph{\(T\)-contractive} if there exist \(\R\ni T_{v_{j}} > 0\), \(j=0,1,\ldots,n-1\), \(2\leq n\leq \abs{V}\), such that the following set of inequalities is satisfied:
        \begin{align}
        \label{e:contrac}
            \Xi_i(W) := \sum_{j=0}^{n-1}\wv_i(v_j)T_{v_{j}} + \sum_{\substack{{j=0}\\{v_{n}:=v_0}}}^{n-1}\we_i(v_j,v_{j+1}) < 0
        \end{align}
        for all \(i=1,2,\ldots,N\), where \(n\) is the length of \(W\), \(\wv(v_{j})\) is the weight associated to vertex \(v_j\), \(\wv_i(v_j)\) is the \(i\)-th element of \(\wv(v_j)\), and \(\we(v_j,v_{j+1})\) is the weight associated to the edge \((v_j,v_{j+1})\), \(\we_i(v_j,v_{j+1})\) is the \(i\)-th element of \(\we(v_{j},v_{j+1})\), \(i=1,2,\ldots,N\), \(j=0,1,\ldots,n-1\).
    }
    \end{defn}
    \begin{rem}
    \label{rem:compa2}
    \rm{
    Definition \ref{d:contrac_cycle} is a natural extension of \cite[Definition 2]{abc} to the continuous-time setting. The scalars \(T_{v_{j}}\), \(j=0,1,\ldots,n-1\) will be employed to associate a time duration with every vertex \(v_j\), \(j=0,1,\ldots,n-1\) that appears in \(W\). This time duration will determine how long a set of \(M\) plants can access the shared communication network while preserving GAS of all plants in the NCS under consideration. Naturally, \(T_{v_{j}}\)'s are real numbers for the continuous-time case, while they are integers for the discrete-time setting. In the sequel we will call the scalars \(T_{v_{j}}\), \(j=0,1,\ldots,n-1\) as the \(T\)-factor of vertex \(v_j\), \(j=0,1,\ldots,n-1\).
    }
    \end{rem}
    \begin{rem}
    \label{rem:contrac_cycle}
    \rm{
        The concept of contractive cycles and its variants have appeared in the context of designing switching signals that preserve stability of continuous-time switched systems earlier in the literature, see e.g., \cite{xyz}. In this paper we will use the notion of \(T\)-contractive cycles to address a harder problem of simultaneously preserving GAS of \(N\) switched systems.
    }
    \end{rem}
    We now move on to our solution to Problem \ref{prob:main}.
\section{Results}
\label{s:mainres}
    We will solve Problem \ref{prob:main} in two steps:
    \begin{itemize}[label = \(\circ\), leftmargin = *]
        \item First, we present an algorithm that constructs periodic scheduling logics \(\gamma\) by employing a \(T\)-contractive cycle on the underlying directed graph \(G(V,E)\) of an NCS and its corresponding \(T\)-factors.
        \item Second, we show that scheduling logics obtained from our algorithm preserve GAS of each plant in \eqref{e:plants}.
    \end{itemize}
    We will also present an algorithm to design \(T\)-contractive cycles on \(G(V,E)\).

    \begin{algorithm}
			\caption{Construction of periodic scheduling logics} \label{algo:sched_logic}
		\begin{algorithmic}[1]
			\renewcommand{\algorithmicrequire}{\textbf{Input:}}
			\renewcommand{\algorithmicensure}{\textbf{Output:}}
			
			\REQUIRE a \(T\)-contractive cycle \(W = v_0,(v_0,v_1),v_1,\ldots,v_{n-1}\),\((v_{n-1},v_0),v_0\) and the corresponding \(T\)-factors \(T_0\), \(T_1,\ldots,T_{n-1}\).
			\ENSURE a periodic scheduling logic \(\gamma\)
			
			 \hspace*{-0.6cm}\textit {Step I: For each vertex \(v_{j}\), \(j = 0,1,\ldots,n-1\), pick the elements
			 \(i\) with label \(\l_{v_{j}}(i) = \is\), \(i=1,2,\ldots,N\), and construct \(M\)-
			 dimensional vectors \(s_{j}\), \(j = 0,1\),\(\ldots,n-1\).}
			 \FOR{ \(j = 0,1,\ldots,n-1\)}
			 	\STATE Set \(p = 0\).
				\FOR {\(i = 1,2,\ldots,N\)}
					\IF {\(\l_{v_{j}}(i) = \is\)}
					\STATE Set \(p=p+1\) and \(u_{j}(p) = i\).
					\ENDIF
				\ENDFOR
			 \ENDFOR
			
			 \hspace*{-0.6cm}\textit {Step II: Construct a scheduling logic using the vectors \(s_{j}\), \(j = 0,1,\ldots,n-1\) obtained in Step I and the \(T\)-factors \(T_{v_{j}}\),\(j = 0,1,\ldots,n-1\)}
			 \STATE Set \(p=0\) and \(\tau_{0} = 0\)
				 \FOR {\(q = pn, pn+1,\ldots, (p+1)n-1\)}			\label{algo_step:rec}
			 	\STATE Set \(\gamma(\tau_{q}) = s_{q-pn}\) and \(\tau_{q+1} = \tau_{q} + T_{v_{q-pn}}\).
				\STATE Output \(\tau_{q}\) and \(\gamma(\tau_{q})\).
			 \ENDFOR
			 \STATE Set \(p = p+1\) and go to \ref{algo_step:rec}.
		\end{algorithmic}
	\end{algorithm}
    Given the matrices \(A_i\), \(B_i\), \(K_i\), \(i=1,2,\ldots,N\) and a number \(M\), Algorithm \ref{algo:sched_logic} designs a scheduling logic \(\gamma\), that specifies, at every time, \(M\) plants that access the shared communication network at that time. Algorithm \ref{algo:sched_logic} is a continuous-time counterpart of \cite[Algorithm 1]{abc}. The key ingredient of Algorithm \ref{algo:sched_logic} is a \(T\)-contractive cycle on the underlying directed graph \(G(V,E)\) of the NCS under consideration. In Step I, corresponding to each vertex \(v_j\), \(j=0,1,\ldots,n-1\), a vector \(s_j\), \(j=0,1,\ldots,n-1\) of length \(M\) is created with the elements \(i\in\{1,2,\ldots,N\}\) for which \(\ell_{v_{j}}(i) = \is\). In Step II, a scheduling logic \(\gamma\) is constructed from the vectors \(s_j\), \(j=0,1,\ldots,n-1\) and the corresponding \(T\)-factors \(T_{v_{j}}\), \(j=0,1,\ldots,n-1\). Sets of \(M\) plants corresponding to the elements in \(s_j\) access the shared communication network for \(T_{v_{j}}\) duration of time, \(j=0,1,\ldots,n-1\), and the process is repeated. Clearly, a scheduling logic \(\gamma\) constructed as above, is periodic with period \(\displaystyle{\sum_{j=0}^{n-1}T_{v_{j}}}\). Theorem \ref{t:mainres} asserts that a scheduling logic obtained from Algorithm \ref{algo:sched_logic} is stabilizing.
    \begin{theorem}
    \label{t:mainres}
        Consider an NCS described in \S\ref{s:prob_stat}. Let the matrices \(A_i\), \(B_i\), \(K_i\), \(i=1,2,\ldots,N\) and a number \(M\:(<N)\) be given. Then each plant \(i\) in \eqref{e:plants} is GAS under a scheduling logic \(\gamma\) obtained from Algorithm \ref{algo:sched_logic}.
    \end{theorem}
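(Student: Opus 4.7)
My plan is to argue GAS plant by plant. Fix $i\in\{1,2,\ldots,N\}$ and consider the switched system \eqref{e:i-swsys} under the switching logic $\sigma_i$ induced by the scheduling logic $\gamma$ produced by Algorithm \ref{algo:sched_logic}. Denote the cycle by $W=v_{0},(v_{0},v_{1}),v_{1},\ldots,v_{n-1},(v_{n-1},v_{0}),v_{0}$, write $T\coloneqq\sum_{j=0}^{n-1}T_{v_{j}}$ for the period, and set $\tau_{kn+j}\coloneqq kT+\sum_{\ell=0}^{j-1}T_{v_{\ell}}$, so that on each subinterval $[\tau_{kn+j},\tau_{kn+j+1}[$ plant $i$ operates in the single mode $\ell_{v_{j}}(i)\in\{\is,\iu\}$ for exactly $T_{v_{j}}$ units of time.

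The central calculation is to propagate the mode-dependent Lyapunov-like function across one full period. On each constant-mode subinterval, Lemma \ref{fact:key1} supplies $V_{\ell_{v_{j}}(i)}(x_{i}(\tau_{kn+j+1}))\le\exp(-\lambda_{\ell_{v_{j}}(i)}T_{v_{j}})V_{\ell_{v_{j}}(i)}(x_{i}(\tau_{kn+j}))$, and at each vertex transition the switch from $V_{\ell_{v_{j}}(i)}$ to $V_{\ell_{v_{j+1}}(i)}$ is absorbed by Lemma \ref{fact:key2}, paying a factor $\mu_{\ell_{v_{j}}(i),\ell_{v_{j+1}}(i)}$ (equal to $1$ when plant $i$'s mode does not change, matching the ``otherwise'' branch in \eqref{e:edge_wt}). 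Chaining across the $n$ subintervals and using the identifications $\wv_{i}(v_{j})=-\lambda_{\ell_{v_{j}}(i)}$ and $\we_{i}(v_{j},v_{j+1})=\ln\mu_{\ell_{v_{j}}(i),\ell_{v_{j+1}}(i)}$ that are built into \eqref{e:vertex_wt}--\eqref{e:edge_wt}, I expect to obtain
\[
    V_{\ell_{v_{0}}(i)}(x_{i}(\tau_{(k+1)n}))\le\exp\bigl(\Xi_{i}(W)\bigr)\,V_{\ell_{v_{0}}(i)}(x_{i}(\tau_{kn}))\qquad\text{for all }k\in\N_{0}.
\]
By the $T$-contractivity of $W$, $\rho_{i}\coloneqq\exp(\Xi_{i}(W))\in\,]0,1[$, so iterating in $k$ yields geometric decay of $V_{\ell_{v_{0}}(i)}(x_{i}(\tau_{kn}))$ along the sampling instants $\tau_{kn}$.

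To upgrade this sampled-time decay to a continuous-time class-$\KL$ estimate on $\norm{x_{i}(t)}$, I will control the intra-period excursion. Applying Lemmas \ref{fact:key1} and \ref{fact:key2} successively up to any intermediate time $t\in[\tau_{kn},\tau_{(k+1)n}[$ furnishes a uniform constant $M_{i}\ge 1$ (depending only on the finitely many $T_{v_{j}}$'s, $\lambda_{p}$'s, and $\mu_{pq}$'s, and in particular independent of $k$) such that $V_{\sigma_{i}(t)}(x_{i}(t))\le M_{i}\,V_{\ell_{v_{0}}(i)}(x_{i}(\tau_{kn}))$. Combining this with the inter-period decay and the two-sided eigenvalue bound \eqref{e:halp_ineq} applied to $P_{\is}$ and $P_{\iu}$ gives an estimate of the form $\norm{x_{i}(t)}\le C_{i}\,\rho_{i}^{\,t/(2T)}\norm{x_{i}(0)}$ for some $C_{i}>0$; this is manifestly a class-$\KL$ bound in the sense of Definition \ref{d:gas}, so plant $i$ is GAS. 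Since $i$ was arbitrary, GAS holds for each plant in \eqref{e:plants}.

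The main obstacle is purely bookkeeping: one must verify that the exponents and multiplicative factors accumulated over one traversal of $W$ assemble \emph{exactly} into the combinatorial quantity $\Xi_{i}(W)$ defined in \eqref{e:contrac}. This hinges on the sign conventions encoded in \eqref{e:vertex_wt}---in particular $\lvert\lambda_{\iu}\rvert=-\lambda_{\iu}$ because $\lambda_{\iu}\le 0$, so that $\wv_{i}(v_{j})T_{v_{j}}$ is the correct exponent in both the stable and unstable cases---together with the natural choice $\mu_{pp}=1$ in Lemma \ref{fact:key2}, which matches the zero edge weight assigned by \eqref{e:edge_wt} whenever plant $i$ keeps its mode across an edge of $W$. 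Once this alignment is in place, the strict negativity of $\Xi_{i}(W)$ supplied by $T$-contractivity does all of the stability work.
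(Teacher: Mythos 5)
Your proposal is correct and follows essentially the same route as the paper's proof: both chain the estimates of Lemmas \ref{fact:key1} and \ref{fact:key2} along the schedule, identify the accumulated exponent over one traversal of \(W\) with \(\Xi_i(W)\) via the sign conventions in \eqref{e:vertex_wt}--\eqref{e:edge_wt}, and use \(T\)-contractivity to obtain per-period decay together with a uniform bound on the intra-period excursion to produce a class-\(\KL\) estimate. The only difference is bookkeeping: you iterate a per-period contraction factor \(\rho_i=\exp(\Xi_i(W))\), whereas the paper accumulates everything into \(\psi_i(t)\) and then splits \([0,t]\) into \(m\) full periods plus a remainder before constructing an explicit class-\(\L\) envelope.
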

    \begin{proof}{(Sketch)}
        Consider the NCS described in \S\ref{s:prob_stat} and its underlying directed graph \(G(V,E)\). Let \(W = v_0,(v_0,v_1),v_1,\ldots,v_{n-1}\),\((v_{n-1},v_0),v_0\) be a \(T\)-contractive cycle on \(G(V,E)\). Consider a scheduling logic \(\gamma\) obtained from Algorithm \ref{algo:sched_logic} constructed by employing \(W\). We will show that \(\gamma\) preserves GAS of each plant in \eqref{e:plants}.

        Fix an \(i\in\{1,2,\ldots,N\}\) arbitrary. It suffices to show that the switched system \eqref{e:i-swsys} is GAS under the switching signal \(\sigma_i\) corresponding to \(\gamma\).

        Fix a time \(t > 0\). Recall that \(0=:\tau_0<\tau_1<\cdots\) are the points in time where \(\gamma\) changes values. Let \(N_{t}^{\gamma}\) denote the total number of times \(\gamma\) has changed its values on \(]0,t]\).

        In view of \eqref{e:key_ineq1}, we have
        \begin{align}
        \label{e:pf1_step1}
            V_{\sigma_i(t)}(x_i(t))\leq \exp\biggl(-\lambda_{\sigma_i(\tau_{N_{t}^{\gamma}})}\bigl(t-\tau_{N_{t}^{\gamma}}\bigr)\biggr)
            V_{\sigma_i(t)}(x_i(\tau_{N_{t}^{\gamma}})).
        \end{align}
        A straightforward iteration of \eqref{e:pf1_step1} applying \eqref{e:key_ineq1} and \eqref{e:key_ineq2} leads to
        \begin{align}
        \label{e:pf1_step2}
            V_{\sigma_i(t)}(x_i(t))&\leq \exp\biggl(-\sum_{\substack{{j=0}\\{\tau_{N_{t}^{\gamma}+1}:=t}}}^{N_{t}^{\gamma}}\lambda_{\sigma_i(\tau_{j})}(\tau_{j+1}-\tau_{j})\biggr)
            \times\prod_{j=0}^{N_{t}^{\gamma}-1}\mu_{\sigma_i(\tau_{j})\sigma_i(\tau_{j+1})} V_{\sigma_i(0)}(x_i(0)).
        \end{align}

        Now,
        \begin{align*}
            \prod_{j=0}^{N_{t}^{\gamma}-1}\mu_{\sigma_i(\tau_{j})\sigma_i(\tau_{j+1})} &= \exp\ln\Biggl(\prod_{j=0}^{N_{t}^{\gamma}-1}\mu_{\sigma_i(\tau_{j})\sigma_i(\tau_{j+1})}\Biggr)\\
            &= \exp\Biggl(\sum_{j=0}^{N_{t}^{\gamma}-1}\ln\mu_{\sigma_i(\tau_j)\sigma_i(\tau_{j+1})}\Biggr)\\
            &= \exp\Biggl(\sum_{p\in\{\is,\iu\}}\sum_{j=0}^{N_{t}^{\gamma}-1}\sum_{\substack{{p\to q:}\\{q\in\{\is,\iu\},}\\{\sigma_i(\tau_j)=p,}\\{\sigma_i(\tau_{j+1})=q}}}\ln\mu_{pq}\Biggr).
        \end{align*}
        Let \(N_{pq}(s,t)\) denote the total number of transitions from subsystem (mode) \(p\) to subsystem (mode) \(q\), \(p,q\in\{\is,\iu\}\) on \(]s,t]\). Then the RHS of the above expression is equal to
        \begin{align}
        \label{e:pf1_step3}
            \exp\biggl(\ln\mu_{\is\iu}N_{\is\iu}(0,t)+\ln\mu_{\iu\is}N_{\iu\is}(0,t)\biggr),
        \end{align}
        since \(\ln\mu_{\is\is}=\ln\mu_{\iu\iu} = 0\). Moreover,
        \begin{align*}
            &\exp\Biggl(-\sum_{\substack{{j=0}\\{\tau_{N_{t}^{\gamma}+1}:=t}}}^{N_{t}^{\gamma}}\lambda_{\sigma_i(\tau_{j})}(\tau_{j+1}-\tau_{j})\Biggr)\\
            =&\exp\Biggl(-\sum_{\substack{{j=0}\\{\tau_{N_t^{\sigma}+1}:=t}}}^{N_{t}^{\gamma}}\biggl(\sum_{p\in\{\is,\iu\}}\mathrm{1}(\sigma_i(\tau_j)=p)\lambda_p(\tau_{j+1}-\tau_{j})\biggr)\Biggr).
        \end{align*}
        Let \(D_s(s,t)\) and \(D_u(s,t)\) denote the total durations of activation of the stable and unstable modes of \(i\) on \(]s,t]\), respectively. Consequently, the above expression is equal to
        \begin{align}
        \label{e:pf1_step4}
            \exp\biggl(-\abs{\lambda_{\is}}D_s(0,t) + \abs{\lambda_{\iu}}D_u(0,t)\biggr).
        \end{align}
        Substituting \eqref{e:pf1_step3} and \eqref{e:pf1_step4} in \eqref{e:pf1_step2}, we obtain
        \begin{align}
        \label{e:pf1_step5}
            V_{\sigma_i(t)}(x_i(t))\leq\psi_i(t)V_{\sigma_i(0)}(x_i(0)),
        \end{align}
        where
        \begin{align}
        \label{e:pf1_step6}
            \N\ni t\mapsto \psi_i(t) &:= \exp\biggl(-\abs{\lambda_{\is}}D_s(0,t)+\abs{\lambda_{\iu}}D_u(0,t)\nonumber\\
            &\:\:+\ln\mu_{\is\iu}N_{\is\iu}(0,t)+\ln\mu_{\iu\is}N_{\iu\is}(0,t)\biggr).
        \end{align}

        From the definition of \(V_p\), \(p\in\{\is,\iu\}\) in \eqref{e:Vp_defn} and properties of positive definite matrices \cite[Lemma 8.4.3]{Bernstein}, it follows that
        \begin{align}
        \label{e:pf1_step7}
            \norm{x_i(t)}\leq c\psi_i(t)\norm(x_i(0))\:\:\text{for all}\:t\in[0,+\infty[,
        \end{align}
        where \(\displaystyle{c = \sqrt{\frac{\max_{p\in\{\is,\iu\}}\lambda_{\max}(P_p)}{\min_{p\in\{\is,\iu\}}\lambda_{\min}(P_p)}}}\). In order to establish GAS of \eqref{e:i-swsys}, we require to show that \(c\norm{x_i(0)}\psi_i(t)\) can be bounded above by a class \(\KL\) function. It is immediate that \(c\norm{x_i(0)}\) is a class \(\Kinfty\) function. It remains to show that \(\psi_i(t)\) is bounded above by a function belonging to class \(\L\).

        Recall that \(\gamma\) is constructed by employing a \(T\)-contractive cycle \(W = v_{0},(v_{0},v_{1}),v_{1},\ldots\),\\\(v_{n-1},(v_{n-1},v_{0}),v_{0}\) on \(G\), and \(T_{v_{j}}\), \(j = 0,1,\ldots,n-1\) are the \(T\)-factors associated to vertices \(v_{j}\), \(j = 0,1,\ldots,n-1\). Let \(\displaystyle{T_{W} := \sum_{j=0}^{n-1}T_{v_{j}}}\), \(t\geq mT_{W}\), \(m\in\N_{0}\), and \(\Xi_{i}(W) = -\varepsilon_{i}\), \(\varepsilon_{i} > 0\), where \(\Xi_{i}(W)\) is as defined in \eqref{e:contrac}.
		By construction of \(\gamma\), we have
		\begin{align}
		\label{e:pf1step7}
			\psi_{i}(t) &= \exp\Biggl(-\abs{\ls}D_{s}(0,t)+\abs{\lu}D_{u}(0,t)
+\ln\mu_{\is\iu}N_{\is\iu}(0,t)+\ln\mu_{\iu\is}N_{\iu\is}(0,t)\Biggr)\nonumber\\
			&=-\abs{\ls}D_{s}(0,mT_{W})-\abs{\ls}D_{s}(mT_{W},t)+\abs{\lu}D_{u}(0,mT_{W})+\abs{\lu}D_{u}(mT_{W},t)\nonumber\\
			&\:\:\:\:\:+\ln\mu_{\is\iu}N_{\is\iu}(0,mT_{W})+\ln\mu_{\is\iu}N_{\is\iu}(mT_{W},t)\nonumber\\
			&\:\:\:\:\:+\ln\mu_{\iu\is}N_{\iu\is}(0,mT_{W})+\ln\mu_{\iu\is}N_{\iu\is}(mT_{W},t).
		\end{align}
		Notice that
		\begin{align*}
			&-\abs{\ls}D_{s}(0,mT_{W}) + \abs{\lu}D_{u}(0,mT_{W})+ \ln\mu_{\is\iu}N_{\is\iu}(0,mT_{W}) + \ln\mu_{\iu\is}N_{\iu\is}(0,mT_{W})\\
			&=-\abs{\ls}m\sum_{\substack{{j:\l_{v_{j}}(i) = \is}\\{j = 0,1,\ldots,n-1}}}T_{v_{j}} + \abs{\lu}m\sum_{\substack{{j:\l_{v_{j}}(i) = \iu}\\{j = 0,1,\ldots,n-1}}}T_{v_{j}}\\
			&\hspace*{1cm}+\ln\mu_{\is\iu}m\#(\is\to\iu)_{W}+\ln\mu_{\iu\is}m\#(\iu\to\is)_{W},
		\end{align*}
		where \(\#(p\to q)_{W}\) denotes the number of times a transition from a vertex \(v_{j}\) to a vertex \(v_{j+1}\) has occurred in \(W\) such that \(\ell_{v_{j}}(i) = p\) and \(\ell_{v_{j+1}}(i) = q\), \(p,q\in\{\is,\iu\}\), \(p\neq q\). The right-hand side of the above equality can be rewritten as
		\begin{align}
		\label{e:pf1step8a}
			m\Biggl(-\abs{\ls}\sum_{\substack{{j:\l_{v_{j}}(i) = \is}\\{j = 0,1,\ldots,n-1}}}T_{v_{j}} &+ \abs{\lu}\sum_{\substack{{j:\l_{v_{j}}(i) = \iu}\\{j = 0,1,\ldots,n-1}}}T_{v_{j}}\nonumber\\+\ln\mu_{\is\iu}\#(\is\to\iu)_{W} &+ \ln\mu_{\iu\is}\#(\iu\to\is)_{W}\Biggr).
		\end{align}
		From the definition of weights associated to vertices and edges of \(G\), we have that the above expression is equal to \(-m\varepsilon_{i}\).
		Also,
		\begin{align}
		\label{e:pf1step8b}
			&-\abs{\ls}D_{s}(mT_{W},t) + \abs{\lu}D_{u}(mT_{W},t)\nonumber\\&+\ln\mu_{\is\iu}N_{\is\iu}(mT_{W},t)+\ln\mu_{\iu\is}N_{\iu\is}(mT_{W},t)\nonumber\\
			&\leq\abs{\lu}(t-mT_{W}) + mn(\ln\mu_{\is\iu}+\ln\mu_{\iu\is}) := a\:\text{(say)}.
		\end{align}
		From \eqref{e:pf1step8a} and \eqref{e:pf1step8b}, we obtain that the right-hand side of \eqref{e:pf1step7} is bounded above by \(\exp\bigl(-m\varepsilon_{i}+a\bigr)\).
		
		Let \(\varphi_{i}:[0,\:t]\to\R\) be a function connecting \((0,\exp(a)+T_{W})\), \((rT_{W},\exp(-(r-1)\varepsilon_{i}+a))\), \((t,\exp(-m\varepsilon_{i}+a))\), \(r = 1,2,\ldots,m\), with straight line segments. By construction, \(\varphi_{i}\) is an upper envelope of \(T\mapsto\psi_{i}(T)\) on \([0{,} t]\), is continuous, decreasing, and tends to \(0\) as \(t\to+\infty\). Hence, \(\varphi_{i}\in\L\).
		
    Since \(i\in\{1,2,\ldots,N\}\) was selected arbitrarily, the assertion of Theorem \ref{t:mainres} follows.
    \end{proof}

    A next natural question is: given the matrices \(A_i\), \(B_i\), \(K_i\), \(i=1,2,\ldots,N\) and the number \(M\), how do we design a \(T\)-contractive cycle \(W = v_0,(v_0,v_1),v_1,\ldots,v_{n-1},(v_{n-1},v_0),v_0\) on the underlying directed graph \(G(V,E)\) of the NCS under consideration? In the remainder of this section we address this question.
    \begin{defn}{\cite[Definition 3]{abc}}
    \label{d:cand_contrac}
    \rm{
        We call a cycle \(W = v_0,(v_0,v_1)\),\(v_1,\ldots,v_{n-1},(v_{n-1},v_0),v_0\) on \(G(V,E)\) \emph{candidate contractive}, if for each \(i=1,2,\ldots,N\), there exists at least one \(v_j\), \(j\in\{0,1,\ldots,n-1\}\) such that \(\ell_{v_{j}}(i) = \is\).
    }
    \end{defn}

    Let \(C_{G}\) denote the set of all candidate contractive cycles on \(G(V,E)\). We next provide an algorithm to design a \(T\)-contractive cycle on \(G(V,E)\).
    \begin{algorithm}[htbp]
	\caption{Design of a \(T\)-contractive cycle on \(G(V,E)\)} \label{algo:cycle_design}
    		\begin{algorithmic}[1]
    			\renewcommand{\algorithmicrequire}{\textbf{Input}:}
			\renewcommand{\algorithmicensure}{\textbf{Output}:}
	
			\REQUIRE The matrices \(A_{i}\), \(B_{i}\), \(K_{i}\), \(i=1,2,\ldots,N\) and the set \(C_G\).
			\ENSURE A \(T\)-contractive cycle \(W = v_0,(v_0,v_1),v_1,\ldots,v_{n-1},(v_{n-1},v_0),v_0\) on \(G(V,E)\) and the corresponding \(T\)-factors \(T_0\), \(T_1,\ldots\), \(T_{n-1}\).
			
			\hspace*{-0.6cm}\textit{Step I: Compute the matrices \(A_{\is}\) and \(A_{\iu}\), \(i=1,2,\ldots,N\)}
			\FOR {\(i=1,2,\ldots,N\)}
				\STATE Set \(A_{\is} = A_{i} + B_{i}K_{i}\) and \(A_{\iu} = A_{i}\)
			\ENDFOR
			
			\hspace*{-0.6cm}\textit{Step II: Fix ranges of values for \(\lambda_p\), \(p\in\{\is,\iu\}\), \(i=1,2,\ldots,N\) and step-sizes \(h_s\) and \(h_u\)}
            \STATE Fix \(\lambda_{s}^{\ell b} > 0\) and \(\lambda_s^{ub} > \lambda_s^{\ell b}\).
            \STATE Fix \(h_s > 0\) (small enough) and compute \(k_s\) such that \(k_s\) is the maximum integer satisfying \(\lambda_s^{\ell b}+k_s h_s\leq \lambda_s^{ub}\).
            \STATE Fix \(\lambda_u^{ub} = 0\) and \(\lambda_u^{\ell b} < \lambda_{u}^{ub}\).
            \STATE Fix \(h_u > 0\) (small enough) and compute \(k_u\) such that \(k_u\) is the largest integer satisfying \(\lambda_u^{\ell b}+k_u h_u \leq \lambda_u^{ub}\).

            \hspace*{-0.6cm}\textit{Step III: Check for pairs \((P_p,\lambda_p)\), \(p\in\{\is,\iu\}\), \(i=1,2,\ldots,N\) under which \(G(V,E)\) admits a \(T\)-contractive cycle}
				
			\FOR {\(\lambda_s = \lambda_s^{\ell b}, \lambda_s^{\ell b}+h_s, \lambda_s^{\ell b} +2h_s,\ldots,\lambda_s^{\ell b}+k_s h_s\)}
                \FOR {\(\lambda_u = \lambda_u^{\ell b}, \lambda_u^{\ell b}+h_u, \lambda_u^{\ell b}+2h_u,\ldots,\lambda_u^{\ell b}+k_u h_u\)}
                    \FOR {\(i=1,2,\ldots,N\)}
					   \STATE Solve for \((P_p,\lambda_p)\), \(p\in\{\is,\iu\}\):
					 	\begin{align}
	           		 	\label{e:feasprob1}
		              				\minimize\:\:&\:\:1\nonumber\\
		              				\sbjto\:\:&\:\:
		                  			\begin{cases}
                              					A_{\is}^\top P_{\is}+P_{\is}A_{\is}\preceq -\ls P_{\is},\\
                              					A_{\iu}^\top P_{\iu}+P_{\iu}A_{\iu}\preceq -\lu P_{\iu},\\
                              					P_{\is}, P_{\iu} \succ 0,\\
                              					\kappa I\preceq P_{\is},P_{\iu}\preceq I_d,\kappa > 0,\\
                                                \kappa\:\text{small enough}.
		                  			\end{cases}
	             		\end{align}
                    \ENDFOR
					 \IF {there is a solution to \eqref{e:feasprob1} for all \(i=1,2,\ldots,N\)}
					 	\STATE Compute \(\mu_{\is\iu} = \lambda_{\max}(P_{\iu}P_{\is}^{-1})\) and \(\mu_{\iu\is} = \lambda_{\max}(P_{\is}P_{\iu}^{-1})\)
                     \ENDIF
						\STATE Solve for \(T_{v_{j}}\), \(j=0,1,\ldots,n-1\):
						 	\begin{align}
	            			\label{e:feasprob2}
		              					\displaystyle{\minimize_{W\in C_G}}\:\:&\:\:1\nonumber\\
		              					\sbjto\:\:&\:\:
		                  				\begin{cases}
                                                    W\:\text{is \(T\)-contractive},\\
                            						T_{v_{j}} > 0,\:\:j=0,1,\ldots,n-1.
                            			\end{cases}
	             			\end{align}
						\IF {there is a solution to \eqref{e:feasprob2}}
					 		\STATE Output \(W = v_0,(v_0,v_1),v_1,\ldots,v_{n-1},(v_{n-1},v_0),v_0\) and the corresponding \(T\)-factors \(T_0\), \(T_1,\ldots\), \(T_{n-1}\), and halt.
						\ENDIF
					\ENDFOR
				\ENDFOR
    		\end{algorithmic}
   \end{algorithm}

   Given the matrices \(A_i\), \(B_i\), \(K_i\), \(i=1,2,\ldots,N\) and the number \(M\), Algorithm \ref{algo:cycle_design} designs a \(T\)-contractive cycle \(W\) on \(G(V,E)\). Recall the computations of the pairs \((P_p,\lambda_p)\), \(p\in\{\is,\iu\}\) from our proof of Fact \ref{fact:key1}. We perform line searches over the intervals \([\ls^{\ell b},\ls^{ub}]\) and \([\lu^{\ell b},\lu^{ub}]\) with step sizes \(h_s\) and \(h_u\), respectively, and solve the feasibility problem \eqref{e:feasprob1} for the pairs \((P_p,\lambda_p)\), \(p\in\{\is,\iu\}\), \(i=1,2,\ldots,N\). The condition \(\kappa I \preceq P_{\is}, P_{\iu}\) limits the condition numbers of \(P_{\is}\) and \(P_{\iu}\) to \(\kappa^{-1}\), and the condition \(P_{\is}, P_{\iu}\preceq I_d\) guarantees that the set of feasible \(P_{\is}, P_{\iu}\) is bounded. If \eqref{e:feasprob1} admits solutions for all \(i=1,2,\ldots,N\), then we compute the scalars \(\mu_{pq}\), \(p,q\in\{\is,\iu\}\), \(i=1,2,\ldots,N\) by using the estimate of \cite[Proposition 4]{chatterjee'14}, and check if any of the candidate contractive cycles on \(G(V,E)\) is \(T\)-contractive. If a solution is found, then Algorithm \ref{algo:cycle_design} outputs the cycle and its corresponding \(T\)-factors, and terminates. Otherwise, the values of \(\ls\) and \(\lu\) are updated, and the search continues.
   \begin{rem}
   \label{rem:compa3}
   \rm{
        The design of \(T\)-contractive cycles on \(G(V,E)\) in this paper differs from \cite[Algorithm 2]{abc} in terms of designing the pairs \((P_p,\lambda_p)\), \(p\in\{\is,\iu\}\), \(i=1,2,\ldots,N\). In \cite[Algorithm 2]{abc} the authors employed designed Lyapunov-like functions in the discrete-time setting, while in the current paper we cater to their continuous-time counterparts.
        }
   \end{rem}
   Algorithm \ref{algo:cycle_design}, however, provides only a partial solution to the problem of designing \(T\)-contractive cycles on the underlying directed graph of an NCS. Indeed, even if the step-sizes \(h_s\) and \(h_u\) are chosen to be very small, only a finite number of possibilities for the pair \((P_p,\lambda_p)\), \(p\in\{\is,\iu\}\), \(i=1,2,\ldots,N\), are explored. Consequently, if no solution to the feasibility problem \eqref{e:feasprob2} is found, we cannot conclude that there does not exist choice of scalars \(\lambda_p\), \(p\in\{\is,\iu\}\) and \(\mu_{pq}\), \(p,q\in\{\is,\iu\}\), \(i=1,2,\ldots,N\), under which \(G(V,E)\) admits a \(T\)-contractive cycle.
   \begin{rem}
   \label{rem:advantage}
   \rm{
        {Switched systems have appeared before in NCSs literature, see e.g., \cite{Ishii'02, Dai'10, Lin'05, Zhang'06}, and average dwell time switching logic is proven to be a useful tool. In the presence of unstable systems, stabilizing average dwell time switching involves two conditions on every interval of time \cite{Liberzon_IOSS}: i) an upper bound on the number of switches and ii) a lower bound on the ratio of durations of activation of stable to unstable subsystems. In contrast, our design of a stabilizing scheduling logic involves design of a \(T\)-contractive cycle on the underlying weighted directed graph of the NCS. To design these cycles, we solve the feasibility problems \eqref{e:feasprob1} and \eqref{e:feasprob2}. We do not impose restrictions on the behaviour of a scheduling logic on every interval of time, thereby leading to numerically tractable stability conditions.}
   }
   \end{rem}
   \begin{rem}
   \label{rem:switched}
   \rm{
        Lyapunov-like functions and graph-theory have been employed to study stability of continuous-time switched systems earlier in the literature, see e.g., \cite{xyz}. In particular, stabilizing switching signals are constructed by concatenating cycles on the underlying directed graph of a switched system that satisfy certain properties. The design of these cycles requires `co'-designing the Lyapunov-like functions and cycles on the directed graph under consideration. This problem is known to be numerically hard. As a natural choice, the existing literature considers the Lyapunov-like functions and a set of scalars corresponding to these functions to be ``given'', and designs cycles on the underlying directed graph of a switched system that satisfy the desired conditions. However, non-existence of such a cycle with the given choice of functions does not conclude non-existence of a class of functions for which the underlying directed graph of a switched system admits a favourable cycle. Algorithm \ref{algo:cycle_design} extends the literature on stability of continuous-time switched systems by providing a partial solution to the problem of `co'-designing Lyapunov-like functions and suitable cycles on the underlying directed graph of a switched system.
   }
   \end{rem}

   We now present two numerical experiments to test our technique for designing stabilizing scheduling logics.
\section{Numerical experiments}
\label{s:num_ex}
\begin{experiment}
\label{ex:exp1}
\rm{
     We consider an NCS with two plants. At any point in time only one plant can communicate with its controller over the shared communication network. More specifically, \(N = 2\) and \(M=1\). The numerical values of the matrices \(A_i\), \(B_i\), \(K_i\), \(i=1,2\) are as follows:
    \begin{align*}
        A_1 &= \pmat{1.2 & 0 & 0 & 0\\0 & 0.8 & 0 & 0\\0 & 0 & 0.4 & 0\\0 & 0 & 0 & 0.2},\\
        B_1 &= \pmat{1 & 0\\0 & 1\\1 & 0\\0 & 1},\\
        K_1 &= \pmat{-40.2184  &  0.0000  & 23.5546  & -0.0000\\
    0.0000 & -34.4621 &  -0.0000 &  18.7252},\\
        \intertext{and}
        A_2 &= \pmat{0.2 & 0 & 0 & 0\\0 & 0.1 & 0 & 0\\0 & 0.05 & 0 & 0\\0 & 0 & 0 & -1},\\
        B_2 &= \pmat{1 & 1\\1 & 1\\1 & 1\\1 & 1},\\
        K_2 &= 1.0e+03 *\pmat{-0.8631  &  1.1950 &  -0.4634  & -0.0102\\
   -0.8631  &  1.1950 &  -0.4634 &  -0.0102}.
    \end{align*}
    The eigenvalues of \(A_1\) and \(A_1+B_1K_1\) are \(0.2, 0.4, 0.8, 1.2\) and \(-14.1705\), \(-0.8933, -14.1542\),\\\(-0.5827\), respectively, while the eigenvalues of \(A_2\) and \(A_2+B_2K_2\) are \(-1, 0.05, 0.1, 0.2\) and \(-282.8432, -0.8686, -0.1699\),\(-0.0777\), respectively.
    We apply our results to design a scheduling logic for the above setting. The following steps are executed:
    \begin{enumerate}[label = \arabic*), leftmargin = *]
        \item The underlying directed graph \(G(V,E)\) of the NCS under consideration has \(2\) vertices and a directed edge between every pair of vertices. The labels of the vertices are:
            \begin{align*}
                L(\overline{v}_1) = \pmat{1_s\\2_u}\:\:\text{and}\:\:L(\overline{v}_2) = \pmat{1_u\\2_s},\:\overline{v}_k\in V,\:k=1,2.
            \end{align*}
        \item We obtain the set of all candidate contractive cycles, \(C_G\) has two components: \(\overline{v}_1,(\overline{v}_1,\overline{v}_2)\),\\\(\overline{v}_2,(\overline{v}_2,\overline{v}_1),\overline{v}_1\) and \(\overline{v}_2,(\overline{v}_2,\overline{v}_1),\overline{v}_1,(\overline{v}_1,\overline{v}_2)\),\(\overline{v}_2\).
        \item We input the matrices \(A_i\), \(B_i\), \(K_i\), \(i=1,2\) and the set of cycles \(C_G\) to Algorithm \ref{algo:cycle_design} and obtain a \(T\)-contractive cycle \(W = \overline{v}_1,(\overline{v}_1,\overline{v}_2),\overline{v}_2,(\overline{v}_2,\overline{v}_1),\overline{v}_1\) with its corresponding \(T\)-factors \(T_{\overline{v}_1} = 19.25\) and \(T_{\overline{v}_2} = 9.36\). The algorithm is implemented in MATLAB R2020a. The following choices of scalars are used: \(\lambda_s^{\ell b} = 0.1\), \(\lambda_u^{ub} = 100\), \(h_s = 0.01\), \(\lambda_u^{\ell b} = 0\), \(\lambda_u^{ub} = 20\), \(h_u = 0.01\), and \(\kappa = 0.01\).
        \item We input the \(T\)-contractive cycle obtained in Step 3) to Algorithm \ref{algo:sched_logic} that designs a scheduling logic \(\gamma\). We have that \(\gamma\) is periodic with period \(T_{\overline{v}_1} + T_{\overline{v}_2} = 28.61\) units of time.
        \item We choose \(10\) different initial conditions from the interval \([-10,10]^{2}\) uniformly at random, and plot \(\norm{x_i(t)}\) versus \(t\) under \(\gamma\) until \(t=150\) units of time, \(i=1,2\). See Figures \ref{fig:xtplot1} and \ref{fig:xtplot2}. GAS of each plant is observed.
    \end{enumerate}
\begin{figure}[!htb]
        \begin{center}
        \includegraphics[scale = 0.6]{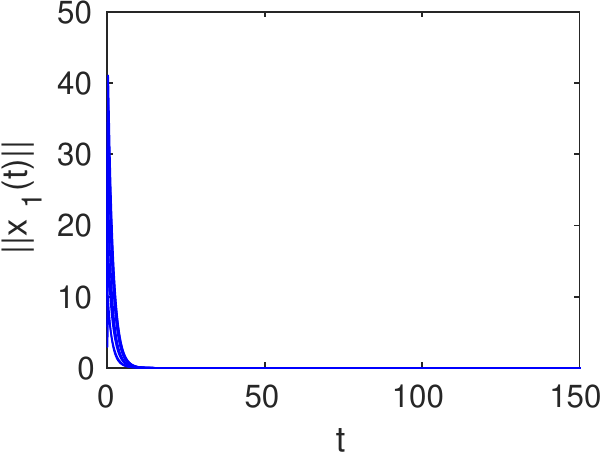}
       \caption{State trajectory for plant 1 under \(\gamma\)}\label{fig:xtplot1}
       \end{center}
      \end{figure}
   \begin{figure}[htbp]
        \begin{center}
        \includegraphics[scale = 0.6]{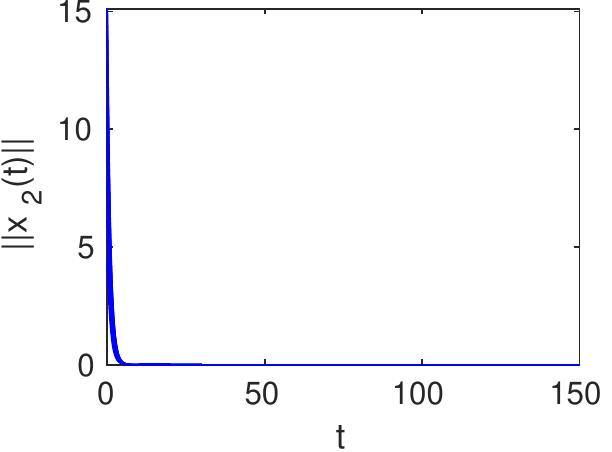}
      \caption{State trajectory for plant 2 under \(\gamma\)}\label{fig:xtplot2}
      \end{center}
\end{figure}
}
\end{experiment}

\begin{experiment}
\label{ex:exp2}
\rm{
	We now test the performance of our techniques in large scale settings, i.e., when the number of plants in the NCS is large. 
        First, we generate \(N\) unstable matrices \(A_{i}\in\R^{2\times 2}\) and vectors \(B_{i}\in\R^{2\times 1}\) with entries from the interval \([-2,2]\) and the set \(\{0,1\}\), respectively, chosen uniformly at random, and ensuring that each pair \((A_{i},B_{i})\), \(i=1,2,\ldots,N\), is controllable. The linear quadratic regulators \(K_{i}\) are computed with \(Q_{i} = Q = 5I_{2\times 2}\) and \(R_{i} = R = 1\). Second, the underlying directed graph \(G(V,E)\) of the NCS is considered, and Algorithm \ref{algo:cycle_design} is employed to design a \(T\)-contractive cycle. Third, this cycle is employed to design stabilizing scheduling logics. In Table \ref{tab:graph_data} we list sizes of \(G\) and lengths of candidate contractive cycles \(W = v_{0},(v_{0},v_{1}),v_{1},\ldots,v_{n-1},(v_{n-1},v_{0}),v_{0}\) for various values of \(N\), and the (rounded-off) computation times. 
    \begin{table}[htbp]
	\centering
	\begin{tabular}{|c | c | c|c|}
		\hline
		\(N\) & \(\abs{V}\) & \(n\) & Computation time (sec)\\
        \hline
        \(100\) & \(1.73\times 10^{13}\) & \(64\) & \(5000\)\\
        \hline
        \(200\) & \(2.24\times 10^{16}\) & \(101\) & \(9000\)\\
        \hline
        \(500\) & \(2.45\times 10^{20}\) & \(327\) & \(37000\)\\
        \hline
        \(700\) & \(7.3\times 10^{21}\) & \(521\) & \(74000\)\\
        \hline
        \(1000\) & \(2.63\times 10^{23}\) & \(803\) & \(90000\)\\
		\hline
	\end{tabular}
    \vspace*{0.2cm}
	\caption{Graph and cycle data}\label{tab:graph_data}
	\end{table}
}
\end{experiment}
\section{Concluding remarks}
\label{s:concln}
    In this paper we addressed the design of scheduling logics for NCSs whose communication networks have limited bandwidth. We presented an algorithm that designs purely time-dependent periodic scheduling logics under which GAS of each plant is preserved. A blend of multiple Lyapunov-like functions and graph theory was employed as the main apparatus for our analysis. The results presented in this paper are a continuous-time counterpart of the techniques proposed in \cite{abc}. 
    
    We identify the following two directions for our future work: First, the extension of our techniques to the design of scheduling logics when the communication networks are also prone to uncertainties like delays, data losses, etc. Second, the co-design of controllers \(K_i\), \(i=1,2,\ldots,N\) and a scheduling logic \(\gamma\) such that good qualitative properties of each plant in an NCS are preserved. The above topics are currently under investigation and will be reported elsewhere.

	

\end{document}